\def\Tr{\mathrm{Tr}} \def\>{\rangle} \def\<{\langle}
 \def\test#1{\mathfrak{#1}}
\def\outx{\mathcal{X}} \def\outy{\mathcal{Y}}
\def\outt{\mathcal{T}} \def\outs{\mathcal{S}}
\def\mE{\mathcal{E}}
 \def\sH{\mathcal{H}} 
\def\sS{\mathsf{S}}
\def\openone{\mathds{1}}
\def\test#1{\boldsymbol{\mathscr{#1}}}
\def\linear{\mathsf{L}}
\def\prob{\mathscr{P}}
\def\mF{\mathcal{F}}
\def\game{\mathsf{G}_{\operatorname{nl}}}
\def\extgame{\mathsf{G}_{\operatorname{sq}}}
\def\sups{\supseteq_{\mathrm{ext}}}
\def\sups{\succcurlyeq_{\operatorname{sq}}}
\def\supnl{\succcurlyeq_{\operatorname{nl}}}
\def\succlosr{\dashrightarrow}
\def\conv{\mathsf{Co}}
\renewcommand{\qedsymbol}{\nobreak \ifvmode \relax \else
  \ifdim \lastskip<1.5em \hskip-\lastskip \hskip1.5em plus0em
  minus0.5em \fi \nobreak \vrule height0.75em width0.5em
  depth0.25em\fi}
\renewcommand{\ge}{\geqslant}
\renewcommand{\le}{\leqslant}
\renewcommand{\rho}{\varrho}
\newtheorem{corollary}{Corollary}
\newtheorem{proposition}{Proposition}
\theoremstyle{definition}
\newtheorem{definition}{Definition}
\begin{document}

\title{All Entangled Quantum States Are Nonlocal}

\author{Francesco Buscemi} \email{buscemi@iar.nagoya-u.ac.jp}
\affiliation{Institute for Advanced Research, Nagoya University,
  Chikusa-ku, Nagoya 464-8601, Japan}

\date{16 April 2012}

\begin{abstract}

  Departing from the usual paradigm of local operations and classical
  communication adopted in entanglement theory, here we study the
  interconversion of quantum states by means of local operations and
  shared randomness. A set of necessary and sufficient conditions for
  the existence of such a transformation between two given quantum
  states is given in terms of the payoff they yield in a suitable
  class of nonlocal games. It is shown that, as a consequence of our
  result, such a class of nonlocal games is able to witness quantum
  entanglement, however weak, and reveal nonlocality in any entangled
  quantum state. An example illustrating this fact is provided.

\end{abstract}

\maketitle


It is a fact that the outcomes of measurements performed on spatially
separated (i.e. non-communicating) quantum systems sometimes exhibit
correlations, which cannot be explained classically, in terms of
information shared beforehand. Such correlations, called
\emph{nonlocal}, are revealed by the violation of a suitable Bell
inequality~\cite{epr, bell}. Another peculiarly nonclassical feature
of quantum theory is the existence of \emph{quantum entanglement},
i.e. the property possessed by composite quantum systems whose joint
state cannot be written in product form (or, more generally, as a
mixture of states in product form). Even if nonlocality and
entanglement are indeed intimately related, it is nowadays widely
accepted that they are in fact two well distinct concepts: first of
all, because there exist entangled quantum states which behave
``locally'' in many aspects~\cite{werner:PRA, barret}; second, because
quantum states that appear to be ``maximally nonlocal'' are generally
not the ``maximally entangled'' ones~\cite{scarani:QIC}. Such a
quantitative distinction is made clear by looking upon nonlocality and
entanglement as two \emph{inequivalent resources}.

In the resource theory of quantum entanglement, the operational
paradigm is commonly known as \emph{local operations and classical
  communication} (LOCC)~\cite{bennett:PRL}: separated parties are only
allowed to exchange classical messages, while quantum operations
(i.e. preparation of quantum states, quantum measurements, etc.) can
only happen locally. In particular, quantum states cannot be directly
sent across separated locations. The LOCC paradigm, originally
formulated in order to describe the ``distant laboratories model'', is
nowadays generally accepted as the natural operational paradigm for
studying quantum entanglement as a resource~\cite{horodecki:RMP}:
indeed, classical communication cannot generate entanglement, which
hence becomes a physical resource that can be processed, but not
created.

In a resource theory of nonlocality, on the other hand, the LOCC
paradigm seems unjustified: even mere classical communication
constitutes in fact a nonlocal resource and, as such, cannot be
granted freely. For this reason, some authors consider the natural
operational paradigm of nonlocality to be that of \emph{local
  operations and shared randomness} (LOSR)~\cite{dukaric}. (A notable
exception to this argument occurs if nonlocality is measured in terms
of \emph{private} correlations: in this case, \emph{public} classical
communication can be freely allowed~\cite{acin:arxiv}.) In the LOSR
framework, separated parties are forbidden all sorts of communication,
being allowed though to ``synchronize'' their local operations with
respect to a common classical random variable shared in
advance. Hence, nonlocal correlations being defined as those
correlations that cannot be simulated by shared
randomness~\cite{short}, nonlocality naturally becomes a resource in
the LOSR paradigm.

The resource theory of quantum entanglement, with respect to the
resource theory of nonlocality, has received until now much more
attention in the literature: correspondingly, many results are known
about the interconversion of quantum states by LOCC
transformations~\cite{horodecki:RMP}, while much less is known about
the LOSR case~\cite{dukaric}. The aim of the present letter is to
contribute in bridging this gap, by providing a set of necessary and
sufficient conditions for the existence of an LOSR protocol
transforming one distributed quantum state into another. Such
conditions, rather than algebraic, are \emph{operational}, in the
sense that they are expressed in terms of the payoffs that a quantum
state yields in nonlocal games. More precisely, the main result of
this letter is to show that one quantum state can be transformed into
another by means of an LOSR protocol, if and only if the former yields
a higher payoff than the latter for a whole class of nonlocal games,
which we call \emph{semi-quantum} nonlocal games. A remarkable merit
of our analysis is to provide a simple and insightful proof of the
fact that \emph{all entangled quantum states are
  nonlocal}~\cite{popescu:hidden}: a corollary of our main result is
that any entangled quantum state yields a strictly higher payoff than
every separable state, in at least one semi-quantum nonlocal
game. This general fact will be also illustrated in an explicit
example, clarifying how semi-quantum nonlocal games are able to
faithfully witness entanglement.\medskip

\emph{Nonlocality ordering.}---In order to rigorously state the main
result (Prop.~\ref{prop:1} below), we first need to introduce some
notation and few definitions. In what follows, all quantum systems are
finite-dimensional (i.e. their Hilbert spaces, denoted by $\sH$, are
finite-dimensional) and index sets (denoted by $\outs$, $\outt$,
$\outx$, and $\outy$) contain only a finite number of elements. The
convex set of probability distributions defined on an index set
$\outx$ is denoted by $\prob(\outx)$. The set of linear operators
acting on a Hilbert space $\sH$ is denoted by $\linear(\sH)$. The set
of density matrices (i.e. positive semi-definite, trace-one operators)
is denoted by $\sS(\sH)\subset \linear(\sH)$.

A random source of states of a quantum system $A$ is represented by an
ensemble $\tau=(\{p(s),\tau^s\};s\in\outs)$, where $p\in\prob(\outs)$
and $\tau^s\in\sS(\sH_A)$, for all $s$. Given an outcome set
$\outx=\{x\}$ and a quantum system $A$ with Hilbert space $\sH_A$, an
$\outx$-probability operator-valued measure ($\outx$-POVM, for short)
on $A$ is a family $P=\left(P^x;x\in\outx\right)$ of positive
semi-definite operators $P^x\in\linear(\sH_A)$, such that
$\sum_{x\in\outx}P^x=\openone$. We denote by $\test{M}(A;\outx)$ the
convex set of all $\outx$-POVMs on $A$. A POVM $P\in\test{M}(A;\outx)$
induces, via the relation $p(x)=\Tr[P^x\rho]$, a linear function
$P:\rho\mapsto P\rho$ from $\sS(\sH_A)$ to $\prob(\outx)$. POVMs in
$\test{M}(A;\outx)$ are used to model measurements performed on a
quantum system $A$ with outcomes in $\outx$.

The notion of nonlocal games is of central importance in our
discussion (we begin here by considering the bipartite case; the
multipartite case follows directly and will be briefly discussed at
the end of the paper):

\begin{definition}\label{def:non-loc-games}
  The rules of a \emph{nonlocal game} $\game$ consist of the
  following: four index sets $\outs=\{s\}$, $\outt=\{t\}$,
  $\outx=\{x\}$, and $\outy=\{y\}$; two probability distributions
  $p\in\prob(\outs)$ and $q\in\prob(\outt)$; a payoff function
  $\wp:\outs \times\outt \times\outx \times\outy\to \mathbb{R}$. A
  referee picks indices $s\in\outs$ and $t\in\outt$ at random with
  probabilities $p(s)$ and $q(t)$, and sends them separately to two
  players, say Alice and Bob, respectively. The two players, without
  communicating with each other, must compute answers $x\in\outx$ and
  $y\in\outy$, respectively, and send them to the referee, who will
  then pay them both (i.e. the game is \emph{collaborative}) an amount
  equal to $\wp(s,t,x,y)$. (It is understood that a negative payoff
  means a loss, i.e. the players must pay the referee.)
\end{definition}

First, the players are told the rules of the game. Knowing the rules,
the players are allowed to agree on any strategy and to share any
possible (static) resource. Later on, the players and the referee
agree to begin the game, and, from that moment on, an implicit rule of
all nonlocal games forbids the players to communicate. According to
quantum theory then, anything the two players can do is to share a
bipartite quantum state $\rho_{AB}\in\sS(\sH_A\otimes\sH_B)$ and,
depending on the questions $s$ and $t$ they are presented, perform
independent measurements on $A$ and $B$ with values in $\outx$ and
$\outy$, respectively.

Imagine now that the state $\rho_{AB}$ shared between Alice and Bob is
\emph{fixed}. It is a well-defined question to ask ``how good'' is the
state $\rho_{AB}$ for playing a given nonlocal game $\game$. In order
to answer this question, it is convenient to use a mathematical model
in which the referee communicates her questions to Alice and Bob by
means of a quantum channel. This means that the referee, depending on
which questions $s\in\outs$ and $t\in\outt$ she picked, prepares two
auxiliary quantum systems $A_0$ and $B_0$, with dimensions $\dim
\sH_{A_0}\ge|\outs|$ and $\dim \sH_{B_0}\ge|\outt|$, in the
orthonormal states $\pi^s:=|s\>\<s|$ and $\pi^t:=|t\>\<t|$, and sends
them to Alice and Bob, respectively. We suppose that the states are
transmitted without noise. Since Alice and Bob exactly know which game
they are playing and which state they are sharing, the payoff they
expect to gain (on average) can be expressed by the following formula:
\begin{equation}\label{eq:gen-corr}
\begin{split}
  \wp^*\left(\rho_{AB};\game\right)&:=\\
  \max&\sum_{s,t,x,y}p(s)q(t)\wp(s,t,x,y)\mu(x,y|s,t),
\end{split}
\end{equation}
where $\mu(x,y|s,t)$ is the joint conditional probability distribution
computed as
\begin{equation}\nonumber
  \Tr\left[(P_{A_0A}^x\otimes
      Q_{BB_0}^y)(\pi^s_{A_0}\otimes\rho_{AB}\otimes\pi^t_{B_0})\right],
\end{equation}
and the maximization is performed over all POVMs
$P\in\test{M}(A_0A;\outx)$ and $Q\in\test{M}(BB_0;\outy)$.

The function $\wp^*(\rho_{AB};\game)$ in~(\ref{eq:gen-corr}) measures
the ``nonlocal utility'' of $\rho_{AB}$ in playing a nonlocal game
$\game$. Accordingly, if another state
$\sigma_{A'B'}\in\sS(\sH_{A'}\otimes\sH_{B'})$ is such that
$\wp^*\left(\sigma_{A'B'};\game\right)\le
\wp^*\left(\rho_{AB};\game\right)$, we say that $\rho_{AB}$ is better
than $\sigma_{A'B'}$ for playing $\game$. By extending this definition
to \emph{all} nonlocal games, we can introduce the following relation:

\begin{definition}\label{def:nonlocality}
  A bipartite state $\rho_{AB}\in\sS(\sH_A\otimes\sH_B)$ is said to be
  (\emph{definitely}) \emph{more nonlocal} than another bipartite
  state $\sigma_{A'B'}\in\sS(\sH_{A'}\otimes\sH_{B'})$, written
  $\rho_{AB}\supnl\sigma_{A'B'}$, if and only if
  $\wp^*\left(\rho_{AB};\game\right)\ge
  \wp^*\left(\sigma_{A'B'};\game\right)$, for all nonlocal games
  $\game$.
\end{definition}

The above definition can be equivalently reformulated in terms of Bell
inequalities~\cite{bell} as follows. Since it is known that to any
nonlocal game there corresponds a Bell inequality and, conversely, to
any Bell inequality there corresponds a nonlocal
game~\cite{bell-games}, we can equivalently say that
$\rho_{AB}\supnl\sigma_{A'B'}$, if and only if $\rho_{AB}$ appears to
be more nonlocal than $\sigma_{A'B'}$ with respect to all Bell
inequalities (or, more precisely speaking, all Bell
expressions~\cite{vertesi}).\medskip

\emph{Local operations and shared randomness.}---Let us now turn to
the LOSR paradigm within quantum theory (again, we begin with the
bipartite case): a completely positive trace-preserving (CPTP) map
$\mE:\linear(\sH_A\otimes\sH_{B})\to\linear(\sH_{A'}\otimes\sH_{B'})$
is said to be an LOSR transformation, if it can be written as
$\sum_i\nu(i)\mE^i\otimes\mF^i$, where
$\mE^i:\linear(\sH_A)\to\linear(\sH_{A'})$ and $\mF^i:\linear(\sH_{B})
\to \linear(\sH_{B'})$ are CPTP maps for all $i$, and $\nu(i)$ is a
probability distribution~\cite{causal}. We then introduce the
following definition:

\begin{definition}\label{def:rand-suff}
  A bipartite state $\rho_{AB}\in\sS(\sH_A\otimes\sH_B)$ is said to be
  \emph{LOSR sufficient} for another bipartite state
  $\sigma_{A'B'}\in\sS(\sH_{A'}\otimes\sH_{B'})$, written
  $\rho_{AB}\succlosr\sigma_{A'B'}$, if and only if there exists an
  LOSR transformation mapping $\rho_{AB}$ into $\sigma_{A'B'}$.
\end{definition}

It is a rather straightforward exercise to prove that the relation
$\succlosr$ implies the relation $\supnl$. In fact,
$\rho_{AB}\supnl(\mE^i\otimes\mF^i)\rho_{AB}$ trivially holds for all
$i$. On the other hand, the payoff achievable with the convex
combination $\sum_i\nu(i)(\mE^i\otimes\mF^i)\rho_{AB}$ cannot exceed
the best payoff achievable with each of its component, i.e. there
exists $i$ such that $(\mE^i\otimes\mF^i)\rho_{AB}\supnl
\sum_i\nu(i)(\mE^i\otimes\mF^i)\rho_{AB}$. This proves the claim.

It is also straightforward to prove that separable states are the
endpoints of the relation $\succlosr$, i.e. for any separable state
$\sigma_{A'B'}$, $\rho_{AB}\succlosr \sigma_{A'B'}$, for all
$\rho_{AB}$. Suppose, in fact, that
$\sigma_{A'B'}\in\sS(\sH_{A'}\otimes\sH_{B'})$ is a separable state,
i.e., $\sigma_{A'B'}=\sum_i\nu(i)\gamma^i_{A'}\otimes\chi^i_{B'}$, for
some probability distribution $\nu(i)$ and some local states
$\gamma^i\in\sS(\sH_{A'})$ and $\chi^i\in\sS(\sH_{B'})$. Then, there
always exists a ``discard-and-prepare'' LOSR map
$\mE:\linear(\sH_{A}\otimes\sH_{B})\to\linear(\sH_{A'}\otimes\sH_{B'})$
such that $\sigma_{A'B'}=\mE(\rho_{AB})$, for all
$\rho_{AB}\in\sS(\sH_{A}\otimes\sH_{B})$, proving the claim.

These two facts together make it easy to verify that, in any nonlocal
game $\game$, all separable states yield exactly the same payoff
$\wp_{\operatorname{sep}}(\game)$. This remark will be useful in what
follows.\medskip

\emph{Semi-quantum nonlocal games.}---At this point, the question of
whether the implication can be reversed, i.e. whether the relation
$\supnl$ implies $\succlosr$ or not, naturally arises, and its answer
is ``no''. Let us consider in fact those entangled quantum states
(called LHVPOV states~\cite{barret}) for which a local-hidden-variable
model exists, describing the outcome statistics of every local POVM
measurement performed on them. This means that, for any nonlocal game
$\game$, the expected payoff obtainable from such entangled states
never exceeds that obtainable from separable states. However, it is
impossible to create an entangled state (even if LHVPOV) by acting
with LOSR transformations on separable states. This proves the claim
that $\supnl$ does not imply $\succlosr$.

The relation $\supnl$ is too weak to imply $\succlosr$. We hence
introduce a stronger version of $\supnl$, by suitably enlarging the
set of nonlocal games we consider. The extended notion of nonlocal
games we need is the following:

\begin{definition}\label{def:extgames}
  The rules of a \emph{semi-quantum nonlocal game} $\extgame$ consist
  of: four index sets $\outs=\{s\}$, $\outt=\{t\}$, $\outx=\{x\}$, and
  $\outy=\{y\}$; two quantum systems $A_0$ and $B_0$; two random
  sources $\tau=(\{p(s),\tau^s\};s\in\outs)$ and
  $\omega=(\{q(t),\omega^t\};t\in\outt)$ on $A_0$ and $B_0$,
  respectively; a payoff function
  $\wp:\outs\times\outt\times\outx\times\outy\to\mathbb{R}$. A referee
  picks indices $s\in\outs$ and $t\in\outt$ at random with
  probabilities $p(s)$ and $q(t)$, and sends the corresponding states
  $\tau^s$ and $\omega^t$ to Alice and Bob, respectively (without
  revealing the actual indices $s$ and $t$ though).  The two players,
  without communicating with each other, must compute answers
  $x\in\outx$ and $y\in\outy$, respectively, and send them to the
  referee, who will then pay them both an amount equal to
  $\wp(s,t,x,y)$.
\end{definition}

In other words, while in conventional nonlocal games the referee asks
the players ``classical'' questions, in semi-quantum nonlocal games
the referee is allowed to ask them ``quantum'' questions. Clearly,
semi-quantum nonlocal games contain, as special cases, conventional
nonlocal games (Def.~\ref{def:non-loc-games}), whenever the states
that the referee sends to Alice and Bob are perfectly distinguishable,
i.e. ``classical''. The situation is depicted in
Figure~\ref{fig:extended}.

\begin{figure}[h]
  \begin{center}
    \includegraphics[width=6cm]{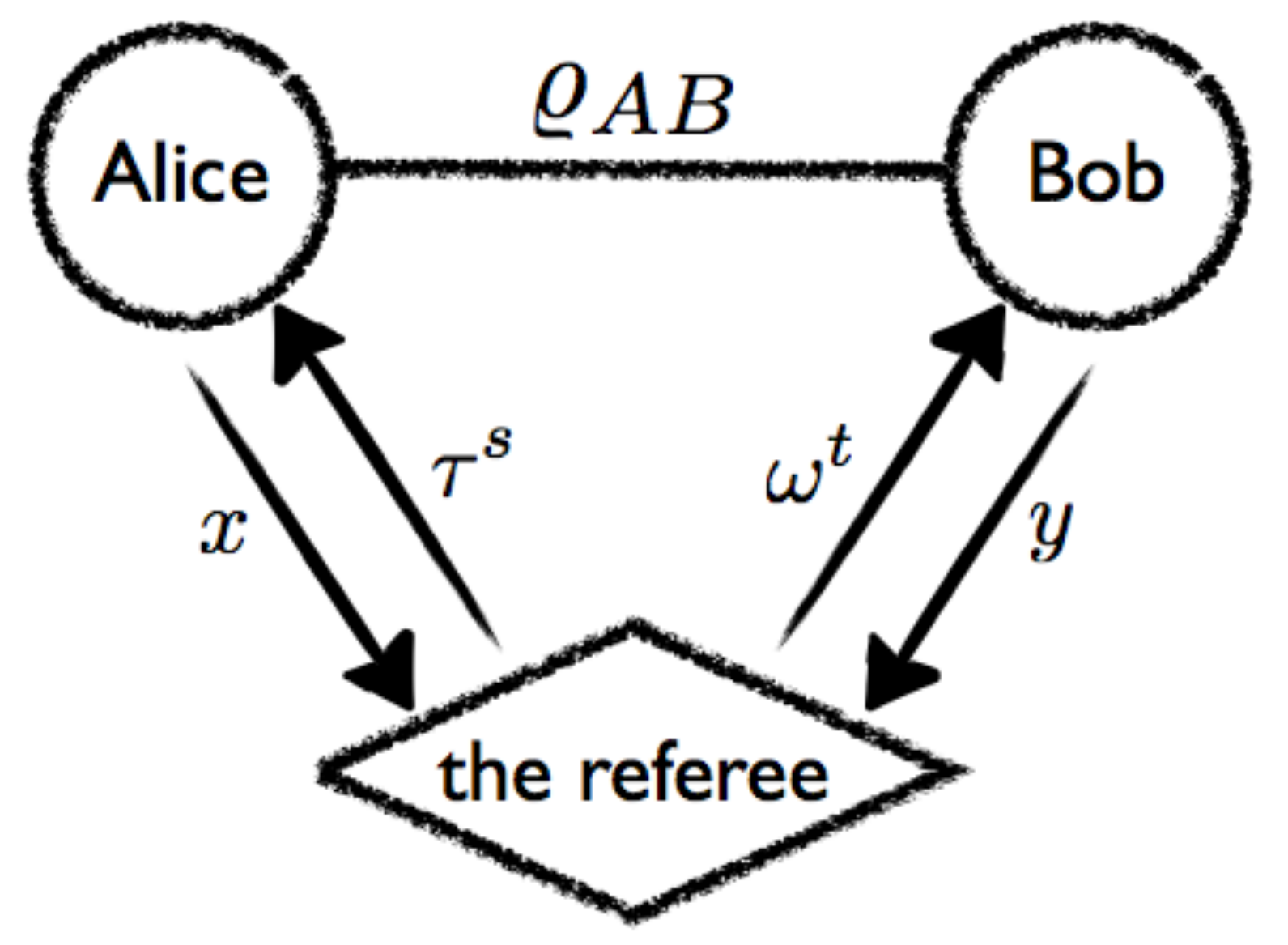}
  \end{center}
  \caption{\small In a semi-quantum nonlocal game
    (Def.~\ref{def:extgames}), while players still reply with
    ``classical'' answers, the referee is allowed to ask ``quantum''
    questions. Whenever the signals $\tau^s$ and $\omega^t$ are
    perfectly distinguishable, i.e. classical, the case of
    conventional nonlocal games (Def.~\ref{def:non-loc-games}) is
    recovered. By means of semi-quantum nonlocal games, it is possible
    to show that all entangled quantum states are nonlocal
    (Cor.~\ref{cor:all-sep-st}).}
  \label{fig:extended}
\end{figure}

As in the case of conventional nonlocal games, the two players are
allowed to share a bipartite quantum state, say $\rho_{AB}$, so that
the expected payoff $\wp^*\left(\rho_{AB};\extgame\right)$ is given by
the same formula~(\ref{eq:gen-corr}), the only difference being that
the joint conditional probability distribution $\mu(x,y|s,t)$ is now
computed as
\begin{equation*}
  \Tr\left[(P_{A_0A}^x\otimes
    Q_{BB_0}^y)(\tau^s_{A_0}\otimes\rho_{AB}\otimes\omega^t_{B_0})\right].
\end{equation*}
Analogously to what was done before, we can compare the nonlocal
utility of two quantum states for \emph{all} semi-quantum nonlocal
games and introduce the following relation:

\begin{definition}\label{def:strong-nonlocality}
  Given two bipartite states $\rho_{AB}\in\sS(\sH_A\otimes\sH_B)$ and
  $\sigma_{A'B'}\in\sS(\sH_{A'}\otimes\sH_{B'})$, we define the
  relation $\rho_{AB}\sups\sigma_{A'B'}$, meaning that
  $\wp^*\left(\rho_{AB};\extgame\right)\ge
  \wp^*\left(\sigma_{A'B'};\extgame\right)$, for all semi-quantum
  nonlocal games $\extgame$.
\end{definition}

Since semi-quantum nonlocal games contain conventional nonlocal games
as a special case, the relation $\sups$ implies the relation
$\supnl$. Moreover, along the same line of thoughts used above to show
that $\succlosr$ implies $\supnl$, it is straightforward to prove that
$\succlosr$ also implies $\sups$.\medskip

\emph{A fundamental equivalence.}---We are now ready to state the main
result of this letter:

\begin{proposition}\label{prop:1}
  Given two bipartite states $\rho_{AB}$ and $\sigma_{A'B'}$,
  $\rho_{AB}\sups\sigma_{A'B'}$ if and only if $\rho_{AB}\succlosr\sigma_{A'B'}$.
\end{proposition}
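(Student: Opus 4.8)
\emph{Proof strategy.}---Only the implication ``$\rho_{AB}\sups\sigma_{A'B'}\Rightarrow\rho_{AB}\succlosr\sigma_{A'B'}$'' requires an argument, the converse having been established above, and I would prove it by contraposition through a separating-hyperplane construction. With the output systems $A'$ and $B'$ held fixed, set
\[
  \mC:=\bigl\{\mE(\rho_{AB}):\mE\text{ an LOSR map }\linear(\sH_A\otimes\sH_B)\to\linear(\sH_{A'}\otimes\sH_{B'})\bigr\}\subseteq\sS(\sH_{A'}\otimes\sH_{B'}).
\]
A mixture of LOSR maps is again an LOSR map, so $\mC$ is convex, and a standard finite-dimensional (Carath\'{e}odory) argument shows that $\mC$ is the continuous image of a compact set of tuples of CPTP maps, hence compact, hence closed. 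By Definition~\ref{def:rand-suff}, the negation $\rho_{AB}\not\succlosr\sigma_{A'B'}$ means precisely that $\sigma_{A'B'}\notin\mC$, so the Hahn--Banach theorem --- applied in the real vector space of Hermitian operators on $\sH_{A'}\otimes\sH_{B'}$ with the Hilbert--Schmidt inner product --- supplies a Hermitian operator $W$ and a real number $c$ with $\Tr[W\sigma_{A'B'}]>c\ge\Tr[W\eta]$ for all $\eta\in\mC$. Since $\Tr[W\,\cdot\,]$ is linear, its maximum over $\mC$ equals $\max_{\mE,\mF}\Tr[W\,(\mE\otimes\mF)(\rho_{AB})]$, the maximum ranging over pairs of local CPTP maps $\mE:A\to A'$ and $\mF:B\to B'$.

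The decisive step is to realize $W$ as a semi-quantum nonlocal game. I would let the referee's signal systems $A_0,B_0$ be copies of $A',B'$; pick \emph{informationally complete} ensembles $\tau=(\{p(s),\tau^s\};s\in\outs)$ on $A_0$ and $\omega=(\{q(t),\omega^t\};t\in\outt)$ on $B_0$, meaning that the $\tau^s$ span $\linear(\sH_{A_0})$ and the $\omega^t$ span $\linear(\sH_{B_0})$; and choose the payoff $\wp(s,t,x,y)$ built from the coefficients of $W$ in an expansion over a fixed product informationally complete POVM $\{M^x\otimes N^y\}$ on $A'\otimes B'$, corrected by the frames dual to $\tau$ and $\omega$. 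The goal is that the resulting game $\extgame$ satisfy, for every bipartite state $\xi$,
\[
  \wp^*(\xi;\extgame)=\max_{\mE,\mF}\Tr\bigl[W\,(\mE\otimes\mF)(\xi)\bigr]=\max\bigl\{\Tr[W\eta]:\eta\text{ an LOSR image of }\xi\bigr\},
\]
the latter equality again by linearity.

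This identity splits into two inequalities. For ``$\ge$'' I would check that any strategy of the form ``Alice maps her share by $\mE$, Bob maps his by $\mF$, and the bipartite observable $W$ is then evaluated'' can be implemented by suitable local POVMs on $A_0A$ and $BB_0$: the quantum signals supply the ancillary entanglement that allows the transformed shares to be routed, teleportation-fashion, into registers against which the referee scores $W$. For ``$\le$'' I would invoke the converse structural fact that, because the signals are informationally complete, a local POVM $(P^x)$ on $A_0A$ is equivalent --- through a Choi-type correspondence --- to a family of completely positive maps whose normalization $\sum_x P^x=\openone$ forces the induced processing of $A$ to be a mixture of channels into $A'$; hence no local strategy can beat $\max_{\mE,\mF}\Tr[W(\mE\otimes\mF)(\,\cdot\,)]$. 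Establishing this equivalence between local POVM strategies on the signal-plus-share systems and local channels into the output systems --- in effect a bipartite operational reformulation of semi-quantum strategies --- is the step I expect to be the main obstacle, and it is precisely here that the informational completeness of the referee's signals is indispensable.

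Granting the construction, the conclusion follows at once. Evaluating the identity at $\xi=\rho_{AB}$ gives $\wp^*(\rho_{AB};\extgame)=\max_{\eta\in\mC}\Tr[W\eta]\le c$, while evaluating it at $\xi=\sigma_{A'B'}$ and taking $\mE=\id_{A'}$, $\mF=\id_{B'}$ (a trivial, hence legitimate, LOSR map) gives $\wp^*(\sigma_{A'B'};\extgame)\ge\Tr[W\sigma_{A'B'}]>c$. Thus $\wp^*(\sigma_{A'B'};\extgame)>\wp^*(\rho_{AB};\extgame)$, contradicting $\rho_{AB}\sups\sigma_{A'B'}$. Hence $\sigma_{A'B'}\in\mC$, i.e.\ $\rho_{AB}\succlosr\sigma_{A'B'}$, completing the proof of Prop.~\ref{prop:1}.
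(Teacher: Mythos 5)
Your overall strategy---contrapose, realize the LOSR images of $\rho_{AB}$ as a closed convex subset $\mC\subseteq\sS(\sH_{A'}\otimes\sH_{B'})$, separate $\sigma_{A'B'}$ from it by a Hermitian witness $W$, and then manufacture a single semi-quantum game whose value on every state $\xi$ equals $\max_{\mE,\mF}\Tr[W(\mE\otimes\mF)(\xi)]$---is a sound dual reorganization of the paper's argument (the paper instead applies the separation theorem to the sets of achievable question--answer distributions and then specializes the game). The difficulty is that essentially all of the technical content of the proposition sits inside the identity $\wp^*(\xi;\extgame)=\max_{\mE,\mF}\Tr[W(\mE\otimes\mF)(\xi)]$, which you state as a goal, describe only vaguely (``coefficients of $W$ \ldots corrected by dual frames''), and explicitly defer as ``the main obstacle''. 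That is a genuine gap, not a routine verification, and your sketch of it is partly misdirected. For the ``$\ge$'' half, the honest strategy cannot be ``apply $\mE$, then measure an informationally complete product POVM on the transformed shares'': each player must measure the referee's signal \emph{jointly} with her transformed share in an entangled (generalized Bell) basis, since only such a joint measurement makes the answer statistics depend on the signal through overlaps of the form $\Tr[(\tau^s)^T(\cdot)]$ that a dual frame can recombine into $W$. Relatedly, ``the quantum signals supply the ancillary entanglement'' is not correct: $\tau^s$ is uncorrelated with anything the referee keeps; what substitutes for the missing entanglement is informational completeness of the ensemble, which lets one treat it as steered from a \emph{virtual} maximally entangled state---exactly the paper's choice $\tau^s_{A_0}=\Tr_{A_1}[(\Theta^s_{A_1}\otimes\openone_{A_0})\Psi^+_{A_1A_0}]$ with $\Theta$ informationally complete.

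For the ``$\le$'' half, the bound does not follow from the bare remark that $\sum_x P^x=\openone$ ``forces the induced processing of $A$ to be a mixture of channels into $A'$''. The Choi-type correspondence attached to an arbitrary $P\in\test{M}(A_0A;\outx)$ gives an \emph{instrument}, i.e.\ a family of CP maps $\Lambda^x(z):=\Tr_{A_0A}[(\openone_{A_1}\otimes P^x_{A_0A})(\Psi^+_{A_1A_0}\otimes z)]$ whose sum is a channel; to turn the $x$-resolved family into a channel $A\to A'$ without erasing the $x$-dependence of the score, the payoff must be arranged so that, for each answer pair $(x,y)$, the operator $\sum_{s,t}\wp(s,t,x,y)\,\tau^s\otimes\omega^t$ on $A_0B_0$ is, up to normalization and transposition, a copy of $W$ conjugated by answer-dependent unitaries $U^x\otimes V^y$. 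Only then does the score of an arbitrary local strategy collapse to $\Tr\bigl[W\bigl(\mE_P\otimes\mF_Q\bigr)(\xi)\bigr]$ with $\mE_P(z)=\sum_x U^x\Lambda^x(z)U^{x\dag}$ (and likewise for Bob) genuinely CPTP, which is what bounds $\wp^*(\rho_{AB};\extgame)$ by $\max_{\eta\in\mC}\Tr[W\eta]\le c$ while the identity channels on $\sigma_{A'B'}$ give $\wp^*(\sigma_{A'B'};\extgame)\ge\Tr[W\sigma_{A'B'}]>c$. Constructing that payoff and verifying the CPTP property of the assembled maps is precisely the teleportation argument (steering ensembles, generalized Bell measurements, correction unitaries, the maps $\mE^i$, $\mF^i$) that constitutes the paper's proof; until you supply it, your proposal assumes the hard part of Prop.~\ref{prop:1} rather than proving it.
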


The proof of Prop.~\ref{prop:1} is based on arguments very similar to
those used in Ref.~\cite{qblack}, and crucially uses the Separation
Theorem between convex sets~\cite{separation}. Being rather technical
in nature, we omit it here, pointing the interested reader to the
supplemental material accompanying this letter~\cite{appendix}. Here
we only discuss one important consequence of our main result, that is,
Prop.~\ref{prop:1} implies that \emph{any entangled state is strictly
  more nonlocal than every separable state}, as stated in the
following corollary:

\begin{corollary}\label{cor:all-sep-st}
  In any semi-quantum nonlocal game $\extgame$, all separable quantum
  states yield exactly the same payoff
  $\wp_{\operatorname{sep}}(\extgame)$. Moreover, a quantum state
  $\rho_{AB}$ is entangled if and only if there exists a semi-quantum
  nonlocal game $\extgame$, for which
  $\wp^*(\rho_{AB};\extgame)>\wp_{\operatorname{sep}}(\extgame)$.
\end{corollary}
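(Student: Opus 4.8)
The plan is to derive Corollary~\ref{cor:all-sep-st} directly from Proposition~\ref{prop:1}, splitting it into the two assertions it contains. For the first assertion, I would invoke the two elementary facts already established in the excerpt: every state $\rho_{AB}$ satisfies $\rho_{AB}\succlosr\sigma$ whenever $\sigma$ is separable (the ``discard-and-prepare'' map), and conversely any separable $\sigma'$ can be produced from any state, so in particular for two separable states $\sigma,\sigma'$ we have both $\sigma\succlosr\sigma'$ and $\sigma'\succlosr\sigma$. By Prop.~\ref{prop:1} this yields $\sigma\sups\sigma'$ and $\sigma'\sups\sigma$, i.e. $\wp^*(\sigma;\extgame)=\wp^*(\sigma';\extgame)$ for every semi-quantum nonlocal game $\extgame$; call this common value $\wp_{\operatorname{sep}}(\extgame)$. (This is the exact analogue of the remark made earlier for conventional nonlocal games, now upgraded to semi-quantum games via Prop.~\ref{prop:1}.) Moreover, since any state is LOSR-sufficient for any separable state, Prop.~\ref{prop:1} also gives $\rho_{AB}\sups\sigma$ for every $\rho_{AB}$ and every separable $\sigma$, so $\wp^*(\rho_{AB};\extgame)\ge\wp_{\operatorname{sep}}(\extgame)$ always holds; the content of the second assertion is the characterization of when this inequality is strict for some game.

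For the second assertion I would argue by contraposition in both directions. Suppose first that $\rho_{AB}$ is separable: then by the first part $\wp^*(\rho_{AB};\extgame)=\wp_{\operatorname{sep}}(\extgame)$ for every $\extgame$, so no semi-quantum game separates it from the separable payoff. Conversely, suppose $\wp^*(\rho_{AB};\extgame)\le\wp_{\operatorname{sep}}(\extgame)$ for all semi-quantum nonlocal games $\extgame$. Fix any separable state $\sigma_{A'B'}$. By the first part, $\wp_{\operatorname{sep}}(\extgame)=\wp^*(\sigma_{A'B'};\extgame)$, so the hypothesis reads $\wp^*(\rho_{AB};\extgame)\le\wp^*(\sigma_{A'B'};\extgame)$ for all $\extgame$; but this is precisely $\sigma_{A'B'}\sups\rho_{AB}$. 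Applying Prop.~\ref{prop:1} in the direction $\sups\Rightarrow\succlosr$, we obtain $\sigma_{A'B'}\succlosr\rho_{AB}$, i.e. there is an LOSR transformation sending the separable state $\sigma_{A'B'}$ to $\rho_{AB}$. Since LOSR maps are of the form $\sum_i\nu(i)\,\mE^i\otimes\mF^i$ with each $\mE^i,\mF^i$ CPTP and local, they send product states to product states and hence separable states to separable states; therefore $\rho_{AB}$ is separable. This closes the contrapositive and establishes ``$\rho_{AB}$ entangled $\iff$ some $\extgame$ with $\wp^*(\rho_{AB};\extgame)>\wp_{\operatorname{sep}}(\extgame)$''.

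The only genuinely nontrivial input is Proposition~\ref{prop:1} itself, whose proof is deferred; everything here is a bookkeeping reduction to it together with the two LOSR observations (discard-and-prepare, and closure of separability under local CPTP maps and convex mixtures). The step I would be most careful about is the direction ``separable $\sigma\succlosr\rho$ forces $\rho$ separable'': this needs the explicit structure of LOSR maps and the fact that the image of a product (hence separable) state under $\mE^i\otimes\mF^i$ is again a product state, so that a convex combination indexed by $i$ of such images stays in the separable cone. No obstacle beyond this; the corollary is essentially an immediate repackaging of the main theorem once one notes that the ``more nonlocal than every separable state'' relation is, via Prop.~\ref{prop:1}, equivalent to ``reachable from a separable state by LOSR'', which for quantum states is the same as being separable.
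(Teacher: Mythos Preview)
Your proposal is correct and follows essentially the same route as the paper: the paper's proof is the one-line remark that separable states, being the endpoints of the relation $\succlosr$, are also the endpoints of $\sups$ by Prop.~\ref{prop:1}, and you have simply unpacked this into its constituent steps (mutual LOSR-sufficiency of separable states, the easy direction $\succlosr\Rightarrow\sups$ for the first assertion, and the hard direction of Prop.~\ref{prop:1} together with closure of separability under LOSR for the second). The only minor remark is that the first assertion does not actually require the nontrivial direction of Prop.~\ref{prop:1}: the implication $\succlosr\Rightarrow\sups$ was already noted in the text before the proposition, so invoking the full equivalence there is harmless overkill.
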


In other words, any entangled quantum state has a form of nonlocality,
which is ``hidden''~\cite{popescu:hidden} for conventional nonlocal
games (and hence Bell inequalities), but becomes apparent when playing
semi-quantum nonlocal games. The proof of the above corollary is a
direct consequence of the fact that separable states, being the
endpoints of the relation $\succlosr$, are also the endpoints of the
relation $\sups$, due to the equivalence established by
Prop.~\ref{prop:1}.\medskip

\emph{An example.}---In order to illustrate the superiority of
semi-quantum nonlocal games, with respect to conventional ones, in
witnessing entanglement, we describe now the example of a semi-quantum
nonlocal game, in which every entangled state gives rise to joint
question-answer probability distributions that cannot be explained as
coming from \emph{any} separable state, even if supplemented with an
unlimited amount of shared randomness. This is true also for entangled
LHVPOV states, which are instead completely indistinguishable from
separable states, if only conventional nonlocal games
(Def.~\ref{def:non-loc-games}) are considered. The example, directly
stemming from the proof of Proposition~\ref{prop:1}
(see~\cite{appendix}), is described here only in the case of two-qubit
states; we remark, however, that the same construction can be easily
carried over to any finite dimensional case.

In our example, $\outs=\outt=\outx=\outy=\{1,2,3,4\}$, the auxiliary
quantum systems used by the referee to encode her questions are
represented by two qubits, i.e.
$\sH_{A_0}\cong\sH_{B_0}\cong\mathbb{C}^2$, and the ``question
states'' are the four tetrahedral states $|\psi^1\>$, $|\psi^2\>$,
$|\psi^3\>$, and $|\psi^4\>$ defined by Davies~\cite{davies}. Notice
that the choice of the question states is somewhat arbitrary: the
important point is that their density matrices constitute a basis for
the linear space $\linear(\mathbb{C}^2)$. (The definition of the
probability distributions on $\outs$ and $\outt$, as well as that of
the payoff function, are not necessary for our argument and can be
omitted.)

Given a two-qubit state $\rho_{AB}$, let us consider the joint
conditional question-answer probability distribution
$\bar\mu(x,y|s,t\|\rho)$ computed as
\begin{equation}\label{eq:mubar}
\Tr\left[(B^x_{A_0A}\otimes B^y_{BB_0})(\psi^s_{A_0}\otimes\rho_{AB}\otimes\psi^t_{B_0})\right],
\end{equation}
where $B^1, B^2, B^3, B^4$ are, respectively, the four orthogonal Bell
measurements on $\Phi^+, \Phi^-, \Psi^+, \Psi^-$. In the process of
proving Prop.~\ref{prop:1} (see~\cite{appendix}), it is also shown
that, in particular, the two-qubit state $\rho_{AB}$ is entangled if
and only if, for any (possibly higher-dimensional) separable state
$\sigma_{A'B'}$ and for any possible POVMs $P\in\test{M}(A_0A';\outx)$
and $Q\in\test{M}(B'B_0;\outy)$,
\begin{equation*}
\begin{split}
  &\bar\mu(x,y|s,t\|\rho)\neq\\
  &\Tr\left[(P^x_{A_0A'}\otimes
    Q^y_{B'B_0})(\psi^s_{A_0}\otimes\sigma_{A'B'}\otimes \psi^t_{B_0})\right].
\end{split}
\end{equation*}
In fact, one can easily check, following the proof of
Prop.~\ref{prop:1}, that an equality in the above equation, for some
separable state $\sigma_{A'B'}$ and some POVMs $P$ and $Q$, would
imply the existence of an LOSR transformation mapping $\sigma_{A'B'}$
into $\rho_{AB}$, hence leading to a contradiction, due to the fact
that LOSR transformations cannot create entangled states from
separable ones. In other words, the state $\rho_{AB}$ is entangled if
and only if the joint conditional probability distribution
$\bar\mu(x,y|s,t\|\rho)$, computed in Eq.~(\ref{eq:mubar}), is out of
reach for any possible separable state, even with the help of
unlimited shared randomness (represented here by the possibility of
$\sigma_{A'B'}$ being on a higher-dimensional Hilbert space).\medskip

\emph{Multipartite states.}---Before concluding, we remark here that
our approach can be straightforwardly extended to consider
multipartite LOSR transformations $\mE: \linear(\sH_A\otimes
\sH_B\otimes \sH_C\otimes\cdots) \to \linear(\sH_{A'}\otimes
\sH_{B'}\otimes \sH_{C'}\otimes\cdots)$ of the form
$\mE=\sum_i\nu(i)\mE^i\otimes \mF^i \otimes \mathcal{G}^i
\otimes\cdots$, where $\mE^i: \linear(\sH_A)\to \linear(\sH_{A'})$,
$\mF^i: \linear(\sH_B)\to \linear(\sH_{B'})$, $\mathcal{G}^i:
\linear(\sH_C)\to \linear(\sH_{C'})$, and so on, are all CPTP maps,
for all $i$. This can be done by considering multipartite semi-quantum
nonlocal games, in which all the players independently receive their
``quantum questions'' from the referee, and by following the same
arguments used to prove the bipartite case.\medskip

\emph{Conclusions.}---We showed that one quantum state can be
transformed into another by means of an LOSR protocol, if and only if
the former is ``more nonlocal'' than the latter, where nonlocality is
quantified by means of semi-quantum nonlocal games
(Def.~\ref{def:strong-nonlocality}). As a by-product, we obtained a
clear-cut proof that \emph{any entangled quantum state is always
  nonlocal}, a fact that should be considered in light of previous
works reaching the same conclusion, although from very different
routes~\cite{popescu:hidden}. In order to support our analysis and
show the superiority of semi-quantum nonlocal games, with respect to
conventional ones, in witnessing entanglement, we also provided an
explicit example of a semi-quantum nonlocal game, in which any
entangled state gives rise to joint question-answer probability
distributions that cannot be explained classically, even if an
unlimited amount of shared randomness is granted.\medskip

\emph{Acknowledgments.}---The author is grateful to Denis Rosset and
Mark M. Wilde for pointing out mistakes in a previous version. An
exchange with Antonio Ac\'in and Miguel Navascu\'es is also gratefully
acknowledged. This research was supported by the Program for
Improvement of Research Environment for Young Researchers from Special
Coordination Funds for Promoting Science and Technology (SCF)
commissioned by the Ministry of Education, Culture, Sports, Science
and Technology (MEXT) of Japan.

\appendix


\onecolumngrid
\newpage

\setcounter{page}{1}
\renewcommand{\thepage}{All Entangled Quantum States
  Are Nonlocal: Supplemental Material -- \arabic{page}/2}

\section{Supplemental Material}

\noindent{\bf Remark.} The numbering of equations and references
follows that given in the main text.\bigskip

\noindent {\bf Remark.} In what follows, for notational convenience,
the density matrices $\tau^s$ and $\omega^t$ are taken sub-normalized,
so that $\Tr[\tau^s]=p(s)$ and $\Tr[\omega^t]=q(t)$.\bigskip

\begin{proof}[Proof of Proposition~\ref{prop:1}]
  We explicitly prove only the non-trivial direction, i.e. the ``only
  if'' part of the statement.\medskip

  We start by making the following observation: the payoff function
  $\wp^*(\rho_{AB};\extgame)$ contains a maximization over local
  measurements $P\in\test{M}(A_0A;\outx)$ and
  $Q\in\test{M}(BB_0;\outy)$ on Alice's and Bob's systems,
  respectively. The set of local measurements does not constitute a
  convex set, in the sense that a convex combination $p(P'\otimes
  Q')+(1-p)(P''\otimes Q'')$, for $P',P''\in\test{M}(A_0A;\outx)$ and
  $Q',Q''\in\test{M}(BB_0;\outy)$, in general cannot be written as
  $P\otimes Q$, for any $P\in\test{M}(A_0A;\outx)$ and
  $Q\in\test{M}(BB_0;\outy)$. However, since the function
\begin{equation}\nonumber
  g(\rho_{AB};\extgame;P,Q):=\sum_{s,t,x,y}\wp(s,t,x,y)\
  \Tr\left[\left(P_{A_0 A}^x\otimes Q_{BB_0}^y\right)\
    \left(\tau_{A_0}^s\otimes\rho_{AB}\otimes\omega_{B_0}^t\right)\right]
\end{equation}
is linear in the POVMs $P$ and $Q$, we can extend it by linearity to
any convex combination $\sum_i \nu(i)P^x_{A_0 A}(i)\otimes Q^y_{BB_0}(i)$, where
$\nu(i)$ are probabilities and $P(i)\in\test{M}(A_0A;\outx)$ and
$Q(i)\in\test{M}(BB_0;\outy)$, for all $i$. Let us denote by
$\conv\{\test{M}(A_0A;\outx)\otimes\test{M}(BB_0;\outy)\}$ the set of
such convex combinations of local POVMs.

Since a linear function is, in particular, convex; since a convex
function on a convex set achieves its maximum on the extremal points
of such set; and since the extremal points of
$\conv\{\test{M}(A_0A;\outx)\otimes\test{M}(BB_0;\outy)\}$ are, by
construction, local POVMs, we have that
\begin{equation}\label{eq:conv-opt}
  \max_{Z\in \conv\{\test{M}(A_0A;\outx)\otimes\test{M}(BB_0;\outy)\}}g(\rho_{AB};\extgame;Z)=\wp^*(\rho_{AB};\extgame).
\end{equation}

For any choice of $\outs,\outt,\outx,\outy,A_0,B_0,\tau,\omega$ (the
meaning of the notation is the same as in
Def.~\ref{def:extgames}), let us now consider the set of
probability distributions defined as follows :
\begin{equation}\nonumber
\begin{split}
  &\prob(\rho_{AB};\outs,\outt,\outx,\outy,A_0,B_0,\tau,\omega):=\\
&\left\{\mu\in\prob(\outs\times\outt\times\outx\times\outy)\left|
\begin{split}
&\mu(s,t,x,y)=\Tr\left[Z_{A_0 ABB_0}^{x,y}\
      \left(\tau_{A_0}^s\otimes\rho_{AB}\otimes\omega_{B_0}^t\right)\right],\\
    &Z\in \conv\{\test{M}(A_0A;\outx)\otimes\test{M}(BB_0;\outy)\}
\end{split}
\right.\right\}.
\end{split}
\end{equation}
Due to the identity~(\ref{eq:conv-opt}), we have that
\begin{equation}\nonumber
  \wp^*(\rho_{AB};\extgame)=\max_{\mu\in \prob(\rho_{AB};\outs,\outt,\outx,\outy,A_0,B_0,\tau,\omega)}\sum_{s,t,x,y}\mu(s,t,x,y)\wp(s,t,x,y).
\end{equation}

The crucial point, now, is that, by construction, the set
$\prob(\rho_{AB};\outs,\outt,\outx,\outy,A_0,B_0,\tau,\omega)$ is
convex, as it inherits the convex structure from
$\conv\{\test{M}(A_0A;\outx)\otimes\test{M}(BB_0;\outy)\}$. Therefore,
following the same arguments presented in more detail
in~\cite{qblack}, as a consequence of the so-called ``separation
theorem'' for convex sets~\cite{separation},
Def.~\ref{def:strong-nonlocality} can be reformulated in the
following way:
\begin{equation}\nonumber
  \rho_{AB}\sups\sigma_{A'B'}\quad \Leftrightarrow\quad 
\prob(\rho_{AB};\outs,\outt,\outx,\outy,A_0,B_0,\tau,\omega)
  \supseteq
  \prob(\sigma_{A'B'};\outs,\outt,\outx,\outy,A_0,B_0,\tau,\omega),
\end{equation}
for any choice of $\outs,\outt,\outx,\outy,A_0,B_0,\tau,\omega$.

More explicitly stated, $\rho_{AB}\sups\sigma_{A'B'}$ if and only if,
for any choice of $\outs,\outt,\outx,\outy,A_0,B_0,\tau,\omega$, and
for any POVM $Z\in
\conv\{\test{M}(A_0A';\outx)\otimes\test{M}(B'B_0;\outy)\}$, there
exists a POVM $\bar Z\in
\conv\{\test{M}(A_0A;\outx)\otimes\test{M}(BB_0;\outy)\}$, such that
\begin{equation}\label{eq:strong-nonlocality2}
\Tr\left[\bar Z_{A_0
        ABB_0}^{x,y}\
    \left(\tau_{A_0}^s\otimes\rho_{AB}\otimes\omega_{B_0}^t\right)\right]=
  \Tr\left[Z_{A_0
        A'B'B_0}^{x,y}\
    \left(\tau_{A_0}^s\otimes\sigma_{A'B'}\otimes\omega_{B_0}^t\right)\right],
\end{equation}
for all $s\in\outs$, $t\in\outt$, $x\in\outx$, and $y\in\outy$.

Let us now choose $A_0$ and $B_0$ to be such that
$\sH_{A_0}\cong\sH_{A'}$ and $\sH_{B_0}\cong\sH_{B'}$. Moreover, let
us introduce two further auxiliary quantum systems $A_1$ and $B_1$,
with $\sH_{A_1}\cong \sH_{A_0}$ ($\cong \sH_{A'}$) and
$\sH_{B_1}\cong\sH_{B_0}$ ($\cong\sH_{B'}$). Next, let us choose
$(\tau^s;s\in\outs)$ on $A_0$ to be given by
\begin{equation}\nonumber
\tau_{A_0}^s=\Tr_{A_1}\left[\left(\Theta_{A_1}^s\otimes \openone_{A_0}\right)\ \Psi^+_{A_1A_0}\right],
\end{equation}
and $(\omega^t;t\in\outt)$ on $B_0$ by
\begin{equation}\nonumber
\omega_{B_0}^t=\Tr_{B_1}\left[\left(\openone_{B_0}\otimes\Upsilon_{B_1}^t\right)\ \Psi^+_{B_0B_1}\right],
\end{equation}
where $\Psi^+$ denotes a maximally entangled state and the two POVMs
$\Theta\in\test{M}(A_1;\outs)$ and $\Upsilon\in\test{M}(B_1;\outt)$
are both informationally complete (i.e. their linear span coincide
with $\linear(\sH_{A_1})$ and $\linear(\sH_{B_1})$,
respectively). Then, Eq.~(\ref{eq:strong-nonlocality2}) can be written
as
\begin{equation}\nonumber
  \begin{split}
  &\Tr\left[\left(\Theta^s_{A_1}\otimes\bar Z_{A_0 ABB_0}^{x,y}\otimes\Upsilon^t_{B_1}\right)\
    \left(\Psi^+_{A_1A_0}\otimes\rho_{AB}\otimes\Psi^+_{B_0B_1}\right)\right]\\
=&\Tr\left[\left(\Theta^s_{A_1}\otimes Z_{A_0
        A'B'B_0}^{x,y}\otimes\Upsilon^t_{B_1}\right)\
    \left(\Psi^+_{A_1A_0}\otimes\sigma_{A'B'}\otimes\Psi^+_{B_0B_1}\right)\right],
\end{split}
\end{equation}
for all $s,t,x,y$.

Due to the fact that the POVMs $\Theta$ and $\Upsilon$ have been
chosen to be informationally complete, we arrive at the following
conclusion: if $\rho_{AB}\sups\sigma_{A'B'}$, then, for any choice of
outcome sets $\outx,\outy$ and POVMs $P\in\test{M}(A_0A';\outx)$ and
$Q\in\test{M}(B'B_0;\outy)$, there exists a POVM $\bar
Z\in\conv\{\test{M}(A_0A;\outx) \otimes \test{M}(BB_0;\outy)\}$, such
that
\begin{equation}\label{eq:strong-nonlocality3}
  \begin{split}
  &\Tr_{A_0ABB_0}\left[\left(\openone_{A_1}\otimes\bar Z_{A_0 ABB_0}^{x,y}\otimes\openone_{B_1}\right)\
    \left(\Psi^+_{A_1A_0}\otimes\rho_{AB}\otimes\Psi^+_{B_0B_1}\right)\right]\\
=&\Tr_{A_0A'B'B_0}\left[\left(\openone_{A_1}\otimes P_{A_0
        A'}^x\otimes Q_{B'B_0}^y\otimes\openone_{B_1}\right)\
    \left(\Psi^+_{A_1A_0}\otimes\sigma_{A'B'}\otimes\Psi^+_{B_0B_1}\right)\right],
\end{split}
\end{equation}
for all $x\in\outx$ and all $y\in\outy$.

Let us now choose $\outx$ and $\outy$ such that
$|\outx|=(\dim\sH_{A'})^2$ and $|\outy|=(\dim\sH_{B'})^2$, and the
POVMs $P$ and $Q$ to be the generalized Bell measurements on $A_0A'$
and $B'B_0$, respectively. With this choice in mind, let us denote the
right-hand side of~(\ref{eq:strong-nonlocality3}) by
$\sigma_{A_1B_1}^{x,y}$. The protocol of quantum teleportation provides
unitary operators $U^x:\sH_{A_1}\to\sH_{A'}$ and
$V^y:\sH_{B_1}\to\sH_{B'}$ such that
\begin{equation}\nonumber
\sum_{x\in\outx}\sum_{y\in\outy}\left(U^x_{A_1}\otimes
    V^y_{B_1}\right) \sigma^{x,y}_{A_1B_1}\left(U^x_{A_1}\otimes
    V^y_{B_1}\right)^\dag=\sigma_{A'B'}.
\end{equation}

On the other hand, since $\rho_{AB}\sups\sigma_{A'B'}$, via
equation~(\ref{eq:strong-nonlocality3}), we know that there exists a
POVM $\bar Z\in\conv\{\test{M}(A_0A;\outx) \otimes
\test{M}(BB_0;\outy)\}$, such that
\begin{equation}\label{eq:strong-nonlocality4}
\begin{split}
&\sigma_{A'B'}\\
=&\sum_{x,y}\left(U^x_{A_1}\otimes
  V^y_{B_1}\right) \Tr_{A_0ABB_0}\left[\left(\openone_{A_1}\otimes
    \bar Z_{A_0
        ABB_0}^{x,y}\otimes\openone_{B_1}\right)\
    \left(\Psi^+_{A_1A_0}\otimes\rho_{AB}\otimes\Psi^+_{B_0B_1}\right)\right]\left(U^x_{A_1}\otimes
    V^y_{B_1}\right)^\dag.
\end{split}
\end{equation}

Finally, by expanding the POVM elements $\bar Z_{A_0 ABB_0}^{x,y}$
into a convex combination $\bar Z_{A_0 ABB_0}^{x,y}=\sum_i \nu(i)\bar
P_{A_0A}^x(i)\otimes\bar Q_{BB_0}^y(i)$, where $\bar
P(i)\in\test{M}(A_0A;\outx)$ and $\bar Q(i)\in\test{M}(BB_0;\outy)$
for all $i$, and by defining CPTP maps $\mE^i(z):\linear(\sH_{A})\to
\linear(\sH_{A'})$ and $\mF^i(w):\linear(\sH_{B})\to\linear(\sH_{B'})$
as
\begin{equation}\nonumber
  \mE^i(z_A):=\sum_{x\in\outx}U^x_{A_1}\ \Tr_{A_0A}\left[\left(\openone_{A_1}\otimes
      \bar P_{A_0
        A}^x(i)\right)\ \left(\Psi^+_{A_1A_0}\otimes z_A\right)\right]\left(U^x_{A_1}\right)^\dag
\end{equation}
and
\begin{equation}\nonumber
  \mF^i(w_B):=\sum_{y\in\outy}V^y_{B_1}\ \Tr_{B_0B}\left[\left(      \bar Q_{B
        B_0}^y(i)\otimes \openone_{B_1}\right)\ \left(w_B\otimes \Psi^+_{B_0B_1}\right)\right]\left(V^y_{B_1}\right)^\dag,
\end{equation}
Eq.~(\ref{eq:strong-nonlocality4}) can be rewritten as
$\sigma_{A'B'}=\sum_i\nu(i)(\mE^i_{A}\otimes\mF^i_{B})(\rho_{AB})$. This
concludes the proof.
\end{proof}

\end{document}